\newtheorem{theorem}{Theorem}
\newtheorem{example}[theorem]{Example} \newtheorem{proposition}[theorem]{Proposition} 
\newtheorem{corollary}[theorem]{Corollary} 
\newcommand{\eps}{\varepsilon}
\tikzset{every state/.style={minimum size=30}}
\title{Operations on Boolean and Alternating Finite
	 Automata}
\author{Galina Jir\'askov\'a\footnote{This research was supported by the Slovak Grant Agency for Science (VEGA) under contract 2/0096/23 “Automata and Formal Languages: Descriptional and Computational Complexity”.}
\institute{Mathematical Institute, Slovak Academy of Sciences,  Gre\v{s}\'akova 6, 040 01 Ko\v{s}ice, Slovakia}
\email{jiraskov@saske.sk}
}
\begin{document}
\maketitle

\begin{abstract}
We examine the complexity of basic regular operations 
on languages represented by  Boolean and alternating finite automata.  We get tight upper bounds $m+n$ and $m+n+1$
for union, intersection, and difference, 
$2^m+n$ and $2^m+n+1$ for concatenation,
$2^n+n$ and $2^n+n+1$ for square,
$m$ and $m+1$ for left quotient,
$2^m$ and $2^m+1$ for right quotient.
We also show that in both models,
the complexity of
complementation and symmetric difference is
$n$ and $m+n$, respectively,
while the complexity of star and reversal is $2^n$.
All our witnesses are described over a unary or binary alphabets,
and whenever we use a binary alphabet, it is 
always optimal.
\end{abstract}

\section{Introduction}

Boolean and alternating finite automata 
\cite{bl80,ck81,hjk18,ji12,ko76,kr20,le81}
are generalizations of nondeterministic
finite automata. They recognize regular languages, however, they
may be exponentially smaller, 
with respect to the  number of states, than equivalent
nondeterministic finite automata (NFAs).
While in an NFA the transition function 
maps any pair of a state and input symbol to a set of states
that can be viewed as a disjunction of the states,
in a Boolean finite automaton (BFA) the result of the transition function  is given
by any Boolean function with variables in the state set.

Fellah et al. \cite{fjy90} examined 
alternating finite automata (AFAs), that is, 
Boolean automata  in which the initial Boolean function 
is given by a projection. 
They proved that every $n$-state AFA  
can be simulated by a ($2^n+1$)-state nondeterministic finite automaton 
with a unique initial state, 
and left as an open problem the tightness of this upper bound. 
An answer to
this problem  was given in~\cite[Lemma~1, Theorem~1]{ji12}
by describing an $n$-state binary AFA 
whose equivalent NFA with a unique initial state 
has at least $2^n+1$ states.
Here we present a different example in which
the reachability and co-reachability of all singleton sets
immediately implies the result.

In~\cite{fjy90} it was also shown that 
given an~$m$-state and $n$-state AFAs for   languages~$K$ and~$L$,
the languages~$L^c$, $K\cup L$, $K\cap L$, $KL$, and~$L^*$
are recognized by AFAs of at most $n,m+n+1,m+n+1, 2^m+n+1$,
and~$2^n+1$ states, respectively,
and the tightness of these upper bounds was left open as well.

Here we present the results obtained in~\cite{hj18,hjk18,ji12,jk19,kr20}
that provide the exact complexity of basic regular operations
on languages represented by Boolean and alternating finite automata.
Table~\ref{ta1} summarizes these results. It also displays
the sizes of alphabet used to describe witness languages. 

\begin{table}[t]
	\renewcommand{\arraystretch}{1.2}  
\centering
\caption{The complexity of basic regular operations on Boolean and alternating finite automata.}
\vskip10pt
\begin{tabular}{@{}llllll@{}}
	\toprule
operation           & BFA & $|\Sigma|$ & AFA & $|\Sigma|$  & source\\
\midrule
complementation    & $n$ & $1$ & $n$ & $1$ & 
  \cite[Thm. 1]{hjk18}\\ 
union               & $m+n$ & 1 & $m+n+1$ & 1 &
  \cite[Thm. 2(1) and 3(1)]{ji12}, \cite[Thm. 4.3 and 4.4]{kr20}\\
intersection        & $m+n$ & 1 & $m+n+1$ & 1 & 
\cite[Thm. 2(2) and 3(2)]{ji12}, \cite[Thm. 4.3 and 4.4]{kr20}\\
difference          & $m+n$ & 1 & $m+n+1$ & 1 &
  \cite[Thm. 13(a) and 14(a)]{hjk18}, \cite[Thm. 4.3 and 4.4]{kr20}\\
symm. difference& $m+n$ & 1 & $m+n$ & 1 &
  \cite[Thm. 13(b) and 14(b)]{hjk18}, \cite[Thm. 4.3 and 4.4]{kr20} \\
star                & $2^n$  & 2 & $2^n$ & 2 & 
\cite[Thm. 12]{hjk18}\\
reversal            & $2^n$  & 2 & $2^n$ & 2 &
  \cite[Thm. 13(c) and 14(c)]{hjk18}\\
right quotient & $2^m$  & 2 & $2^m+1$ & 2 & 
\cite[Thm. 13(d) and 14(d) ]{hjk18}\\
left quotient       &  $m$  & 1 & $m+1$ & 1 & 
\cite[Thm. 13(e) and 14(e)]{hjk18}\\  
concatenation       & $2^m+n$ & 2 & $2^m+n+1$ & 2&
\cite[Thm. 4 and 5]{ji12},\cite[Thm. 6.4]{hj18}\\
square              & $2^n+n$ & 2 & $2^n+n+1$ & 2& 
\cite[Thm. 13 and 14]{jk19}\\  
\bottomrule
\end{tabular}
\label{ta1}
\end{table}

\section{Preliminaries}

Let~$\Sigma$ be a non-empty \emph{alphabet} of symbols.
Then~$\Sigma^*$ denotes the set of all \emph{strings} over 
the alphabet~$\Sigma$ including the empty string~$\eps$.
A \emph{language} over~$\Sigma$ is any subset of~$\Sigma^*$.

A \emph{Boolean finite automaton} (BFA) is 
a quintuple~$A=(Q,\Sigma,\cdot,g_s,F)$ 
where~$Q=\{q_1,q_2,\ldots,q_n\}$ is a finite non-empty
set of states,
$\Sigma$ is a finite input alphabet,
$\cdot$ is a transition function that maps~$Q\times\Sigma$
into the set~$\mathcal{B}_n$ of Boolean functions
with variables~$\{q_1,q_2,\ldots,q_n\}$, 
$g_s\in \mathcal{B}_n$ is the initial Boolean function,
and~$F\subseteq Q$ is the set of final states.
The transition function~$\cdot$ is extended 
to the domain~$\mathcal{B}_n\times\Sigma^*$
as follows: For each~$g\in\mathcal{B}_n$,
each~$a\in\Sigma$, and each~$w\in\Sigma^*$, we have
\begin{itemize}
	\item $g \cdot \varepsilon=g$,
	\item if~$g=g(q_1,q_2,\ldots,q_n)$,
	then~$g\cdot a = g(q_1\cdot a,q_2\cdot a,\ldots,q_n\cdot a)$,
	\item $g\cdot( wa) = (q\cdot w) \cdot a$.
\end{itemize}
Let~$f=(f_1,f_2,\ldots,f_n)$ be a Boolean vector (\emph{finality vector}) such that~$f_i=1$ if and only if~$q_i\in F$.
The \emph{language accepted by} a BFA~$A$ 
is the set of strings~$L(A)=\{w\in\Sigma^*\mid 
(g_s\cdot w)(f)=1\}$.
We illustrate the above mentioned notions in the following example.

\begin{example}\rm
Consider the 2-state binary Boolean finite 
automaton~$A=(\{q_1,q_2\},\{a,b\},\cdot, q_1\land q_2,\{q_1\})$
where the transition function~$\cdot$ 
is defined in Table~\ref{ta2}.
\begin{table}[h!]
	\centering
	\renewcommand{\arraystretch}{1.2}
	\caption{The transition function of the BFA~$A$.}
	\vskip10pt
	\begin{tabular}{{@{}lll@{}}}
		\toprule
		$\cdot$       & $a$                 & $b$ \\
		\midrule
		$q_1$         & $q_1\lor q_2$    & $q_1$ \\
		$q_2$         & $q_2$    & $q_1\land \lnot q_2$ \\
		\bottomrule
	\end{tabular}
\label{ta2}
\end{table}
\\
Then the string~$ab$ is accepted by~$A$ since we have
$$
 g_s\cdot  ab=  (q_1\land q_2) \cdot ab = ((q_1\lor q_2)\land q_2) \cdot b
   =( q_1\lor (q_1\land \lnot q_2)) \land (q_1\land \lnot q_2),
$$
and the resulting function evaluates 
to 1 in the finality vector~$(1,0)$.
\qed
\end{example}	
	
A BFA~$A$ is called \emph{alternating} (AFA)	
if its initial function 
is a projection~$g_s(q_1,q_2,\ldots,q_n)=q_1$;
cf.~\cite{ck81,fjy90,yu97}.
It is \emph{nondeterministic with multiple initial states} (MNFA)
if~$g_s$ and all~$q_i\cdot a$
are of the form~$q_{i_1}\lor q_{i_2} \lor \cdots\lor q_{i_\ell}$.
%
If moreover~$g_s=q_1$, then~$A$ is \emph{nondeterministic} 
(with a unique initial state) (NFA).
If moreover all~$q_i\cdot a$
are of the form~$q_j$, then~$A$ is \emph{deterministic} (DFA).

\section{Simulations of BFAs and AFAs 	by MNFAs, NFAs, and DFAs}
In this section we recall the trade-offs between different models of finite automata. Let us start with the simulation of BFAs by MNFAs.

\begin{proposition}[
	{\cite[Theorem~4.1]{fjy90}},
	{\cite[Lemma~1]{ji12}}]
	\label{prop:bfa-to-mnfa}
	Let~$L$ be a language accepted by an~$n$-state~BFA.
	Then~$L$ is accepted by a~$2^n$-state MNFA 
	whose reverse is a DFA.	
\end{proposition}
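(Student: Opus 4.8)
The plan is to build the MNFA by first constructing a deterministic automaton for the reverse language $L^R$ and then reversing it. I would take the state set to be all Boolean vectors in $\{0,1\}^n$, which I identify with subsets of $Q$ via the finality vector convention; this already supplies the required $2^n$ states. On this state set I define a DFA $D$ with initial state equal to the finality vector $f$, set of final states $\{v \in \{0,1\}^n : g_s(v) = 1\}$, and transition function $\delta$ given component-wise by $\delta(v,a)_i = (q_i \cdot a)(v)$; that is, from a vector $v$ on input $a$ we evaluate each transition function $q_i \cdot a$ at $v$ and collect the resulting bits. Since each component is a single well-defined bit, $D$ is a complete deterministic automaton with a unique initial state.

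The heart of the argument is the claim that, for every input string $u$, the computation of $D$ on $u$ ends in the vector $v$ with $v_i = (q_i \cdot u^R)(f)$ for all $i$. I would prove this by induction on $|u|$. The base case $u = \varepsilon$ is immediate, since $q_i \cdot \varepsilon = q_i$ evaluates to $f_i$ at $f$. For the inductive step $u = u'a$ I first record the standard fact $g \cdot (xy) = (g \cdot x) \cdot y$, which follows from the given recursive definition of $\cdot$ by a short induction on $|y|$; then $q_i \cdot u^R = q_i \cdot (a\,(u')^R) = (q_i \cdot a) \cdot (u')^R$, and evaluating this at $f$ through the substitution rule $h \cdot s = h(q_1 \cdot s, \ldots, q_n \cdot s)$ matches exactly $q_i \cdot a$ evaluated at the vector reached after reading $u'$, which is the induction hypothesis applied to $u'$.

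Granting the claim, $D$ accepts $u$ if and only if $g_s$ holds at the final vector, i.e. $g_s((q_1 \cdot u^R)(f), \ldots, (q_n \cdot u^R)(f)) = (g_s \cdot u^R)(f) = 1$, which by definition means $u^R \in L$, equivalently $u \in L^R$. Hence $L(D) = L^R$. Reversing $D$ then yields an MNFA $M = D^R$ on the same $2^n$ states: its initial states are the final states of $D$ (possibly several, which is exactly why multiple initial states are needed), its unique final state is $f$, and $L(M) = L(D)^R = L$. Since $M^R = D$ is a DFA, the automaton $M$ is the desired $2^n$-state MNFA whose reverse is deterministic.

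I expect the main obstacle to be bookkeeping the direction of composition. The BFA evaluates a string left to right through nested substitutions, whereas $D$ must consume the reversal, so the induction has to line up $q_i \cdot u^R$ with the backward scan and invoke the associativity property $g \cdot (xy) = (g \cdot x) \cdot y$ at precisely the right moment; getting the reversal and the substitution order consistent is the only delicate point, and once that correspondence is pinned down the remaining verifications are routine.
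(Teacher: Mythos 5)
Your proof is correct and is essentially the paper's construction: the DFA $D$ you build for $L^R$ (states $\{0,1\}^n$, initial state $f$, accepting set $\{v \mid g_s(v)=1\}$, component-wise evaluation transitions) is precisely the reverse of the MNFA the paper defines directly via $u\circ a=\{u'\mid (q_1\cdot a,\ldots,q_n\cdot a)(u')=u\}$. The only differences are presentational — you construct the DFA for $L^R$ and reverse it, while the paper constructs the MNFA and observes that its reverse is deterministic — and that you supply the inductive correctness argument which the paper's proof idea leaves implicit.
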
 

\begin{proof}[Proof Idea]
	Let~$A=(Q,\Sigma,\cdot,g_s,F)$ be a BFA with~$Q=\{q_1,q_2,\ldots,q_n\}$.
	Let~$f=(f_1,f_2,\ldots,f_n)$ be the Boolean finality vector
	with~$f_i=1$ iff~$q_i\in F$.
	Construct a MNFA~$A'=(Q',\Sigma,\circ,I,\{f\})$ where
	\begin{itemize}
		\item $Q'=\{0,1\}^n$,
		\item $I=\{u\in Q'\mid g_s(u)=1\}$,
		\item for each~$u\in Q'$ and each~$a\in\Sigma$,
		we set~$u\circ a=\{u'\in Q'\mid (q_1\cdot a,q_2\cdot a,\ldots,q_n\cdot a)(u')=u\}$.
	\end{itemize}
Then~$L(A)=L(A')$.
\end{proof}

Since the reverse of the MNFA in the proof above is a DFA,
we get the next result. 

\begin{corollary}
	\label{co:bfa-to-dfa}
	If~$L$ is accepted by an~$n$-state BFA,
	then~$L^R$ is accepted by a~$2^n$-state DFA.
	\qed
\end{corollary}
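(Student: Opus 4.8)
The plan is to read the corollary straight off Proposition~\ref{prop:bfa-to-mnfa}. That proposition already hands us a $2^n$-state MNFA $A'$ with $L(A')=L$, together with the crucial structural fact that the \emph{reverse} of $A'$ is deterministic. So essentially all the work is done, and what remains is to turn this reverse machine into a genuine DFA for $L^R$ and to count its states.

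First I would make the reversal explicit. For any MNFA $M=(Q,\Sigma,\delta,I,F)$ the reverse automaton is ${M}^{R}=(Q,\Sigma,\delta^R,F,I)$ with $\delta^R(q,a)=\{p\in Q\mid q\in\delta(p,a)\}$, and it is a standard fact that $L({M}^{R})=L(M)^R$. Applying this to the MNFA $A'$ from the proof of the proposition, the reverse ${A'}^{R}$ has state set $\{0,1\}^n$, singleton initial set $\{f\}$, set of final states $I=\{u\mid g_s(u)=1\}$, and, for each $u'\in\{0,1\}^n$ and $a\in\Sigma$, the transition sending $u'$ under $a$ to every $u$ with $u'\in u\circ a$, that is, to every $u$ with $(q_1\cdot a,\ldots,q_n\cdot a)(u')=u$.

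The key observation, which is exactly the content of the proposition's last clause, is that this reverse transition is single-valued: for a fixed $u'$ and $a$ the vector $(q_1\cdot a,\ldots,q_n\cdot a)(u')$ is one uniquely determined element of $\{0,1\}^n$, so $u'$ has exactly one $a$-successor. Together with the singleton initial set $\{f\}$, this makes ${A'}^{R}$ a DFA on the $2^n$ states $\{0,1\}^n$, and by the reversal identity it accepts $L(A')^R=L^R$, giving the claimed $2^n$-state DFA. I do not expect any genuine obstacle: the difficulty has been absorbed into Proposition~\ref{prop:bfa-to-mnfa}, and the only point needing care is the routine verification that reversing $A'$ yields a single initial state and a single successor per state--symbol pair, which is immediate from the construction.
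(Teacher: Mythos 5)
Your proof is correct and follows exactly the paper's route: the paper derives this corollary in one sentence from Proposition~\ref{prop:bfa-to-mnfa}, noting that the reverse of the constructed MNFA is a DFA, which is precisely what you verify in detail (unique initial state $f$ and the single-valued reverse transition $u'\mapsto(q_1\cdot a,\ldots,q_n\cdot a)(u')$). Your write-up simply makes explicit the routine check the paper leaves to the reader.
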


Notice that if~$A$ is an~AFA, then the~MNFA~$A'$
constructed in the proof of Proposition~\ref{prop:bfa-to-mnfa}
has~$2^{n-1}$ initial states, and we get the following observation.

\begin{corollary}
	\label{co:afa-to-dfa}
	If~$L$ is accepted by an~$n$-state AFA,
	then~$L^R$ is accepted by a~$2^n$-state DFA
	of which~$2^{n-1}$ are final.
	\qed
\end{corollary}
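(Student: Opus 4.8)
The plan is to invoke the construction from Proposition~\ref{prop:bfa-to-mnfa} applied to the given $n$-state AFA~$A$, and then read off what happens when the resulting MNFA is reversed. First I would recall that the MNFA~$A'=(Q',\Sigma,\circ,I,\{f\})$ produced there has state set~$Q'=\{0,1\}^n$, the single final state~$\{f\}$, and initial set~$I=\{u\in Q'\mid g_s(u)=1\}$, and that its reverse is a DFA accepting~$L^R$.

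The key step is to count the initial states of~$A'$ in the alternating case. Since~$A$ is an AFA, its initial function is the projection~$g_s(q_1,\ldots,q_n)=q_1$, so $I=\{u\in\{0,1\}^n\mid u_1=1\}$, which contains exactly~$2^{n-1}$ vectors. I would then pass to the reverse automaton~$(A')^R$: reversing an automaton interchanges the roles of initial and final states, so~$(A')^R$ has~$f$ as its unique start state and the~$2^{n-1}$ members of~$I$ as its final states, and it accepts~$(L(A'))^R=L^R$. By Proposition~\ref{prop:bfa-to-mnfa} this reverse is deterministic, so~$L^R$ is recognized by a~$2^n$-state DFA exactly~$2^{n-1}$ of whose states are final.

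There is essentially no obstacle here beyond bookkeeping. The only points that require care are that the projection form of~$g_s$ forces~$I$ to be precisely the set of vectors with first coordinate~$1$, hence of size~$2^{n-1}$, and that the number of final states of the reverse DFA equals the number of initial states of~$A'$. Both are immediate from the definitions, which is exactly why the statement is recorded as a direct corollary of the preceding proposition rather than proved from scratch.
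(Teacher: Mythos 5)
Your proposal is correct and follows exactly the paper's reasoning: the paper derives this corollary from the observation that when $A$ is an AFA, the MNFA $A'$ of Proposition~\ref{prop:bfa-to-mnfa} has $2^{n-1}$ initial states (precisely the vectors with first coordinate $1$), which become the final states of the reverse DFA. There is nothing to add; your bookkeeping of the projection $g_s=q_1$ and the initial/final swap under reversal matches the intended argument.
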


Our next aim is to get the converses of the above corollaries.
\begin{proposition}[{\cite[Lemma~2]{ji12}}]
	Let~$L$ be accepted by a~$2^n$-state MNFA
	whose reverse is a DFA.
	Then~$L$ is accepted by an~$n$-state BFA.
\end{proposition}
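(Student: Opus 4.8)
The plan is to invert the construction from the proof of Proposition~\ref{prop:bfa-to-mnfa}. Given a $2^n$-state MNFA $M=(Q',\Sigma,\circ,I,F')$ whose reverse is a DFA, I first identify its state set with $\{0,1\}^n$ via any bijection. Since the reverse of $M$ is a complete DFA, two facts follow: $M$ has a single final state, so $F'=\{f\}$ for some $f\in\{0,1\}^n$, and for each symbol $a$ the backward reading of $a$ is a total function $\tau_a\colon\{0,1\}^n\to\{0,1\}^n$, namely the transition function of the reverse DFA. Writing $\tau_a$ in coordinates as $\tau_a=(\tau_a^{(1)},\dots,\tau_a^{(n)})$ with each $\tau_a^{(i)}\colon\{0,1\}^n\to\{0,1\}$, I define an $n$-state BFA $A=(\{q_1,\dots,q_n\},\Sigma,\cdot,g_s,F)$ by setting $q_i\cdot a=\tau_a^{(i)}(q_1,\dots,q_n)$ (any Boolean function on $n$ variables is an admissible transition), by letting $g_s$ be the Boolean function with $g_s(u)=1$ iff $u\in I$ (realizable as the disjunction of the minterms indexed by $I$), and by taking $F=\{q_i\mid f_i=1\}$ so that the finality vector of $A$ is exactly $f$.

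The heart of the argument is the identity
\[
  (g\cdot w)(v)=g(\tau_w(v))\qquad\text{for all } g\in\mathcal{B}_n,\ w\in\Sigma^*,\ v\in\{0,1\}^n,
\]
where $\tau_\eps$ is the identity map and $\tau_{wa}=\tau_w\circ\tau_a$. I would prove this by induction on $|w|$: the base case follows from $g\cdot\eps=g$, and the inductive step uses $g\cdot(wa)=(g\cdot w)\cdot a$ together with the fact that, by the choice of transitions, $h\cdot a=h(\tau_a^{(1)},\dots,\tau_a^{(n)})=h\circ\tau_a$ for every $h\in\mathcal{B}_n$, hence $(h\cdot a)(v)=h(\tau_a(v))$.

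I then read off acceptance on both sides. On the automaton side, the reverse DFA has start state $f$ and final-state set $I$; feeding it $w^R$ starting from $f$ deterministically reaches the single state $\tau_w(f)$ (for $w=a_1\cdots a_k$ this is $(\tau_{a_1}\circ\cdots\circ\tau_{a_k})(f)$), so $w\in L(M)$ iff $w^R\in L(M^R)$ iff $\tau_w(f)\in I$. On the BFA side, $w\in L(A)$ iff $(g_s\cdot w)(f)=1$, and by the identity above this equals $g_s(\tau_w(f))$, which is $1$ precisely when $\tau_w(f)\in I$. Comparing the two conditions yields $L(A)=L(M)$.

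The step I expect to be the main obstacle is keeping the two reading directions consistent: the extended transition function of the BFA composes the maps $\tau_a$ in the order that corresponds to scanning $w$ from right to left, which is exactly the order in which the reverse DFA consumes $w^R$ out of the state $f$. Getting this orientation right, and correctly matching $I$ with the final states and $f$ with the start state of the reverse DFA, is where the bookkeeping must be handled with care; everything else is routine.
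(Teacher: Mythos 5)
Your construction is exactly the paper's: identify the MNFA states with $\{0,1\}^n$, take $g_s$ to be the characteristic function of $I$, take the final states of the BFA from the coordinates of the unique final state $f$, and let the transition functions be the coordinate functions of the reverse DFA's transition map (the paper writes this as $(q_1\circ a,\ldots,q_n\circ a)(u)=au$, which is your $\tau_a$). The paper gives only a proof idea and omits the correctness verification; your induction on $(g\cdot w)(v)=g(\tau_w(v))$ and the matching of reading directions fills in precisely that omitted routine part, so this is the same approach, correctly carried out.
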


\begin{proof}[Proof Idea]
	Let~$A=(Q,\Sigma,\cdot,I,F)$ be a~MNFA with~$Q=\{0,1\}^n$.
	Since~$A^R$ is a DFA, the MNFA~$A$ has a unique final state~$f\in Q$, and moreover, 
	for each~$u\in Q$ and each~$a\in\Sigma$
	there is a unique state~$u'$ with~$u'\cdot a=u$;
	denote this state by~$au$. 
	Construct a BFA~$A'=(Q',\Sigma,\circ,g_s,F')$ where
	\begin{itemize}
		\item $Q'=\{q_1,q_2,\ldots,q_n\}$,
		\item $g_s(u)=1$ iff~$u\in I$,
		\item $F'=\{q_i\mid f_i=1\}$,
		\item $(q_1\circ a,q_2\circ a, \ldots,q_n\circ a)(u)=au$.
	\end{itemize}
Then~$L(A)=L(A')$.
\end{proof}

\begin{corollary}
	\label{co:dfa-to-bfa}
	If~$L$ is accepted by a~$2^n$-state DFA,
	then~$L^R$ is accepted by an~$n$-state BFA.
	\qed
\end{corollary}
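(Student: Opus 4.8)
The plan is to reduce the statement to the immediately preceding Proposition by passing to the reversal of the given automaton. Suppose $L$ is accepted by a $2^n$-state DFA $D=(Q,\Sigma,\cdot,s,F)$ with $|Q|=2^n$. I would first form the reverse automaton $D^R$ in the usual way: keep the same state set $Q$, reverse every transition (so that in $D^R$ the state $p$ goes to $q$ on input $a$ exactly when $q=p\cdot a$ in $D$), take the set $F$ of final states of $D$ as the set of initial states of $D^R$, and take the single initial state $s$ of $D$ as the unique final state of $D^R$. This $D^R$ is a $2^n$-state MNFA, and a routine induction on the length of the input word shows that $L(D^R)=L^R$.

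The key observation is that the reverse of $D^R$ is $D$ itself. Reversing the transitions of $D^R$ restores the transitions of $D$; the unique final state $s$ of $D^R$ becomes the unique initial state; and the initial set $F$ of $D^R$ becomes the set of final states. Hence $(D^R)^R=D$, which is a DFA by hypothesis (note that a DFA is permitted to have several final states, so the multiplicity of $F$ causes no trouble). Therefore $L^R$ is accepted by a $2^n$-state MNFA, namely $D^R$, whose reverse is a DFA.

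It then remains only to invoke the preceding Proposition applied to $D^R$: since $L^R$ is accepted by a $2^n$-state MNFA whose reverse is a DFA, $L^R$ is accepted by an $n$-state BFA, which is exactly the claim. I do not expect a genuine obstacle here, as the substantive work has been front-loaded into the Proposition; the only point demanding care is the bookkeeping of the reversal, that is, checking that $D^R$ indeed has $2^n$ states, accepts $L^R$, and has the original deterministic $D$ as its reverse, so that the hypotheses of the Proposition are met verbatim.
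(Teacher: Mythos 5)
Your proposal is correct and is exactly the argument the paper intends: the corollary is stated with an immediate \qed because it follows from the preceding Proposition by the very reduction you describe (reverse the DFA $D$ to get a $2^n$-state MNFA for $L^R$ whose reverse is $D$ itself, then apply the Proposition). Your careful bookkeeping of the reversal simply makes explicit what the paper leaves implicit.
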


\begin{corollary}
	\label{co:dfa-to-afa}
	If~$L$ is accepted by a~$2^n$-state DFA
    which has~$2^{n-1}$ final states,
	then~$L^R$ is accepted by an~$n$-state AFA.
	\qed
\end{corollary}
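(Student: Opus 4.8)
The plan is to follow the proof of Corollary~\ref{co:dfa-to-bfa} almost verbatim, the only new ingredient being a careful choice of the state labelling of the DFA that forces the resulting BFA to have a projection as its initial function, and hence to be an AFA. In this sense the statement is to Corollary~\ref{co:dfa-to-bfa} exactly what Corollary~\ref{co:afa-to-dfa} is to Corollary~\ref{co:bfa-to-dfa}.

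First I would fix a complete $2^n$-state DFA $D$ for $L$ that has exactly $2^{n-1}$ final states, and identify its state set with $\{0,1\}^n$ through a bijection chosen so that the final states of $D$ are mapped precisely onto the ``half-cube'' $H=\{u\in\{0,1\}^n\mid u_1=1\}$. Such a bijection exists exactly because $|H|=2^{n-1}$ coincides with the number of final states of $D$; this is the single place where the hypothesis on the number of final states enters, and renaming states leaves every structural property of $D$ intact.

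Next I would pass to the reverse automaton $A=D^{R}$, a $2^n$-state MNFA accepting $L^{R}$. Its set of initial states $I$ is the set of final states of $D$, so $I=H$ by the choice of labelling, and its reverse is $D$ itself, which is a DFA. Thus $A$ meets the hypotheses of the proposition that converts a $2^n$-state MNFA whose reverse is a DFA into an $n$-state BFA. Applying that construction yields an $n$-state BFA $A'=(Q',\Sigma,\circ,g_s,F')$ for $L^{R}$ whose initial function satisfies $g_s(u)=1$ iff $u\in I$, that is, iff $u_1=1$. Hence $g_s(q_1,\ldots,q_n)=q_1$ is a projection, so $A'$ is in fact an AFA, as required.

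The only real obstacle is the bookkeeping around the labelling: one has to check that the chosen bijection sends the final states of $D$ exactly onto $H$ and then read off that the induced initial function equals the projection $q_1$. Everything else---the acceptance of $L^{R}$, the $n$-state bound, and the correctness of the transition functions---is inherited unchanged from the earlier proposition and corollary.
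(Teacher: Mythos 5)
Your proof is correct and follows the paper's intended route: the paper states this as an immediate corollary of the proposition that a $2^n$-state MNFA whose reverse is a DFA yields an $n$-state BFA, with the initial function $g_s(u)=1$ iff $u\in I$ becoming the projection $q_1$ once the $2^{n-1}$ final states of the DFA are labelled as the half-cube $\{u\mid u_1=1\}$. You have merely made explicit the relabelling step that the paper leaves implicit, so there is nothing to correct.
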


We continue with the simulation of BFAs by DFAs.	
	
\begin{proposition}[
	{\cite[Theorem~7]{ko76}},
	{\cite[Theorem~2]{bl80}},
	{\cite[Theorem~5.2]{ck81}},
	{\cite[Corollary~3]{le81}}]
	Let~$L$ be a language over an alphabet~$\Sigma$ 
	accepted by an~$n$-state BFA.
	Then~$L$ is accepted by a DFA of at most~$2^{2^n}$ states,
	and this upper bound is tight if~$|\Sigma|\ge2$.
\end{proposition}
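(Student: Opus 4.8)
The plan is to obtain the upper bound directly from the simulations already available, and then to obtain tightness by transferring a worst-case reversal example through the DFA-to-BFA correspondence.

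For the upper bound, I would start from an $n$-state BFA for $L$ and apply Proposition~\ref{prop:bfa-to-mnfa} to get an equivalent MNFA with $2^n$ states. Running the ordinary subset construction on this MNFA yields a DFA for $L$ whose states are subsets of a set of size $2^n$; there are exactly $2^{2^n}$ such subsets, so the DFA has at most $2^{2^n}$ states. This is the whole argument for the bound, and no further work is needed.

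For tightness over a binary alphabet, the key observation is that Corollary~\ref{co:dfa-to-bfa} turns a $2^n$-state DFA for a language $M$ into an $n$-state BFA for $M^R$. Hence it suffices to exhibit a binary $2^n$-state DFA $D$ whose reversal is as expensive as possible, that is, such that the minimal DFA of $L(D)^R$ has exactly $2^{2^n}$ states; then the $n$-state BFA supplied by Corollary~\ref{co:dfa-to-bfa} for $L(D)^R$ witnesses that the bound $2^{2^n}$ cannot be lowered. I would take $D$ to be a minimal $2^n$-state DFA of the standard reversal-witness shape over $\{a,b\}$, with one letter acting as a cyclic permutation of all $2^n$ states and the other acting as a suitable non-injective map, together with a single final state. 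The reversal of such a DFA is precisely the well-studied worst case for the reversal operation on deterministic automata.

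The main obstacle is the lower-bound verification for this witness, that is, showing that determinizing the reverse NFA $D^R$ really produces all $2^{2^n}$ subsets and that they are pairwise inequivalent. Reachability of every subset is proved by the usual inductive argument: the permutation letter rotates a given reachable set, the non-injective letter changes the set on a single coordinate, and these two moves together generate every subset from the initial one. Distinguishability then comes essentially for free from Brzozowski's theorem: since $D$ is chosen to be a minimal, and in particular accessible, DFA, the accessible part of the subset automaton of $D^R$ is already a minimal DFA, so once all $2^{2^n}$ subsets are reachable they are automatically pairwise distinguishable. Combined with the upper bound this pins down the exact value $2^{2^n}$, and since the witness uses only the two letters $a$ and $b$, it meets the hypothesis $|\Sigma|\ge 2$.
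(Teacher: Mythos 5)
Your upper bound argument is exactly the paper's (convert to a $2^n$-state MNFA via Proposition~\ref{prop:bfa-to-mnfa}, then determinize), and your overall tightness strategy is also the paper's: exhibit a binary $2^n$-state DFA $D$ whose reversal requires $2^{2^n}$ deterministic states, then apply Corollary~\ref{co:dfa-to-bfa} to get an $n$-state BFA for $L(D)^R$. The paper, however, obtains such a $D$ by citation, namely from~\cite[Proposition~2]{le81}, whereas you attempt to construct one and verify it yourself; your appeal to Brzozowski's theorem for distinguishability is correct (accessibility of $D$ suffices), but your reachability argument has a genuine gap.

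The gap is that a DFA of the shape you describe does not exist. In the subset automaton of $D^R$, a letter $x$ acts on a subset $S$ by taking preimages, $S \mapsto \{p \mid p\cdot x \in S\}$. If the second letter $b$ is to "change the set on a single coordinate," then (checking its action on singletons) $b$ must merge one state $q$ into another state $p$ and act as the identity elsewhere; its preimage map is then the conditional toggle that adds $q$ to $S$ when $p\in S$ and removes $q$ when $p\notin S$. Together with rotations this is far too weak to generate all subsets. For instance, if $p$ and $q$ are adjacent on the cycle, every reachable subset is an arc of cyclically consecutive states: rotations preserve arcs, and the toggle can only extend an arc ending at $p$ or trim an arc starting at $q$. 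Starting from the singleton set of final states, this gives only about $(2^n)^2=4^n$ reachable subsets, nowhere near $2^{2^n}$. If instead the merged pair lies at cyclic distance $m$ with $\gcd(m,2^n)=1$, relabelling the states by multiplication by $m^{-1}$ reduces to the adjacent case, so the failure is not an artifact of a bad choice of $m$; for $\gcd(m,2^n)>1$ the situation only degenerates further. So "full cycle plus a single merge" can never be a witness, and the reachability of all $2^{2^n}$ subsets---which is the actual combinatorial core of the tightness claim---requires a more delicately designed second letter together with a nontrivial proof. That hard part is precisely the content of~\cite[Proposition~2]{le81} (see also the binary reversal witnesses of \v{S}ebej~\cite{js12}), which the paper invokes rather than reproves; your proposal leaves it unproved, and as literally stated it is false.
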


\begin{proof}[Proof Idea]
	If~$L$ is accepted by an~$n$-state BFA,
	then by Proposition~\ref{prop:bfa-to-mnfa},
	it is accepted by a~$2^n$-state MNFA,
	and, consequently, by a~$2^{2^n}$-state DFA.
	For tightness, let~$K$ be the binary~$2^n$-state DFA
	from~\cite[Proposition~2]{le81}
	whose reversal~$K^R$ requires~$2^{2^n}$ deterministic states.
	By Corollary~\ref{co:dfa-to-bfa}, the language~$K^R$
	is accepted by an~$n$-state BFA.
\end{proof}
Finally, we consider the simulation of BFAs by NFAs,
and provide an answer to an open problem from~\cite{fjy90}.

\begin{theorem}[{\cite[Theorem~1]{ji12}}]
	Let~$L$ be 
	accepted by an~$n$-state BFA.
	Then~$L$ is accepted by an NFA of at most~$2^n+1$ states.
	This upper bound is tight, 
	and it can be met by a binary~$n$-state AFA.
\end{theorem}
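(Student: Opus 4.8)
The plan is to establish the upper bound and then its tightness through a single binary AFA witness.

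For the upper bound, I would take an $n$-state BFA for $L$ and apply Proposition~\ref{prop:bfa-to-mnfa} to obtain an equivalent MNFA $A'=(\{0,1\}^n,\Sigma,\circ,I,\{f\})$ with $2^n$ states whose reverse is a DFA. An MNFA is an NFA except that it may have several initial states, so I would adjoin one fresh state $s$, set $s\circ a=\bigcup_{u\in I}(u\circ a)$ for each $a\in\Sigma$, declare $s$ the unique initial state, and make $s$ final exactly when $\eps\in L$ (that is, when some state of $I$ is final). The automaton $B$ obtained this way is an NFA with a single initial state, it accepts $L$, and it has $2^n+1$ states, which yields the bound.

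For tightness I would exhibit a binary $n$-state AFA $A$, form the MNFA $A'$ as above --- now with $2^{n-1}$ initial states, by the remark preceding Corollary~\ref{co:afa-to-dfa} --- and the single-initial NFA $B$ with state set $\{0,1\}^n\cup\{s\}$, so that $B$ has $2^n+1$ states. It then suffices to prove that $B$ is a \emph{minimal} NFA for $L=L(A)$. The lower bound is where reachability and co-reachability of singletons enter. Suppose that in the subset automaton of $B$ every singleton $\{p\}$ with $p\in\{0,1\}^n\cup\{s\}$ is reachable, say $\{p\}$ is reached from the initial state by a word $x_p$; and suppose that in the subset automaton of $B^R$ every singleton $\{p\}$ is reachable as well, which amounts to a word $y_p$ that is accepted in $B$ from $p$ and from no other state (the fact that the reverse of $A'$ is a DFA is what makes each such $y_p$ single\nobreakdash-source). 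Then $\{(x_p,y_p)\}_p$ is a fooling set: $x_py_p\in L$ because $x_p$ reaches exactly $\{p\}$ and $y_p$ is accepted from $p$, whereas for $p\neq q$ reading $x_p$ reaches only $p$ while $y_q$ is accepted from $q$ alone, so $x_py_q\notin L$. A fooling set of size $2^n+1$ forces every NFA for $L$ to have at least $2^n+1$ states, so $B$ is minimal; thus, once the witness is set up, reachability and co-reachability of all $2^n+1$ singletons \emph{immediately} give the result.

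The main obstacle is to design the binary AFA so that this double reachability actually holds for all $2^n+1$ singletons, and in particular for the extra state~$s$. On $\{0,1\}^n$ each input acts, after reversal, as a total function, so in the subset automaton of $B$ a letter sends a set $S$ to the preimage of $S$; reaching a prescribed singleton $\{u\}$ therefore requires a word whose associated map has exactly one state in the preimage of~$I$, and arranging this for \emph{every} $u$ with only two letters is the combinatorial heart of the construction. The genuinely delicate point is the state~$s$: since $s$ records the $2^{n-1}$ initial states of $A'$, separating it from all $2^n$ other singletons is exactly the gap between automata with many initial states (where $2^n$ states suffice) and automata with a single initial state (where one more is unavoidable), and it is this gap that produces the ``$+1$''. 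I would expect most of the effort to go into verifying that the chosen two-letter transitions realize all the required singleton reachings in both directions, including the one witnessing~$s$.
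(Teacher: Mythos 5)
Your upper-bound argument is correct and matches the paper's (convert the BFA to a $2^n$-state MNFA via Proposition~\ref{prop:bfa-to-mnfa}, then add one fresh initial state). The tightness argument, however, has a fatal flaw, and it is not a repairable detail. Concretely: the singleton $\{s\}$ can never be co-reachable in your NFA $B$. By your own definition $s\circ a=\bigcup_{u\in I}(u\circ a)$, so every nonempty word accepted from $s$ is also accepted from some $u\in I$, and $\eps$ is accepted from $s$ exactly when it is accepted from some state of $I$; hence no word is accepted from $s$ \emph{alone}, and the pair $(x_s,y_s)$ your fooling set needs does not exist. More fundamentally, no choice of witness can rescue the plan: the fooling-set lower bound applies to automata with \emph{multiple} initial states as well (in the standard proof, the state assigned to a pair $(\eps,u)$ is simply whichever initial state the chosen accepting computation of $u$ starts in, so all assigned states are still pairwise distinct). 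Since $L$ is, by construction, accepted by a $2^n$-state MNFA, no fooling set for $L$ can have more than $2^n$ pairs. The ``$+1$'' is exactly the quantity that fooling sets are blind to, so any correct proof must exploit the uniqueness of the initial state by some argument that goes beyond exhibiting a fooling set. You sense this yourself when you call separating $s$ ``the genuinely delicate point,'' but the delicacy is an impossibility. A further, independent gap is that you never actually define the binary witness automaton; all the combinatorics is deferred.

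The paper's proof avoids both problems. It gives an explicit $2^n$-state MNFA $A$ (Figure~\ref{fig:mnfaA}): initial states $\{0,\dots,2^{n-1}-1\}$, final state $2^n-1$, a cyclic $a$-action, and a $b$-action that collapses the initial set to $\{0\}$ and expands $2^n-1$ to $Q\setminus\{0\}$. One checks that $A^R$ is a DFA with $2^{n-1}$ final states (so $L$ has an $n$-state AFA by Corollary~\ref{co:dfa-to-afa}) and that every singleton is reachable and co-reachable \emph{in the $2^n$-state MNFA $A$ itself}, not in an enlarged NFA. Then \cite[Lemma~9]{ho21} supplies precisely the missing step: the singletons yield a fooling set $\{(x_q,y_q)\}$ of size $2^n$, so any NFA has at least $2^n$ states $p_q$; and if it had exactly $2^n$ states, its unique initial state would coincide with some $p_r$, whence for two \emph{distinct} initial states $i_1\ne i_2$ of $A$ the words $y_{i_1},y_{i_2}$ both lie in $L$, are therefore accepted from $p_r$, and so the NFA accepts both $x_ry_{i_1}$ and $x_ry_{i_2}$ --- yet at most one of these strings is in $L$, since $x_r$ leads exactly to $\{r\}$ in $A$ and $r$ cannot equal both $i_1$ and $i_2$. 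That is the shape of argument your proposal is missing: a fooling set of size $2^n$ plus a dedicated pigeonhole argument about the single initial state, rather than a (nonexistent) fooling set of size $2^n+1$.
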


\begin{proof}[Proof Idea]
	By Proposition~\ref{prop:bfa-to-mnfa},
	the language~$L$ is accepted by a~$2^n$-state MNFA,
	and, consequently, by a~$(2^n+1)$-state NFA.
	For tightness, let~$n\ge2$. 
	Let~$L$ be the language accepted by the~$2^n$-state 
	MNFA $A=(Q,\{a,b\},\cdot,I,F)$ where
	\begin{itemize}
		\item $Q=\{0,1,\ldots,2^n-1\}$,
		\item $I=\{0,1,\ldots,2^{n-1}-1\}$,
		\item $F=\{2^n-1\}$,
		\item $i\cdot a =\{(i+1)\bmod 2^n\}$ for each~$i\in Q$,
		\item $0\cdot b=\{0\}$, $(2^n-1)\cdot b= Q\setminus\{0\}$,
		and~$i\cdot b=\emptyset$ is~$i\in Q\setminus\{0,2^n-1\}$;		
	\end{itemize}
    see Figure~\ref{fig:mnfaA} for an illustration~\ref{fig:mnfaA}.
    	The reverse~$A^R$ is a~$2^n$-state DFA which has~$2^{n-1}$ final states.  
    By Corollary~\ref{co:dfa-to-afa}, the language~$L$
    is accepted by an~$n$-state AFA. 
    On the other hand, each singleton set
    is reachable and co-reachable in the MNFA~$A$
    which means that
    every NFA accepting~$L$ has at least~$2^n+1$ states
    by~\cite[Lemma~9]{ho21}.	
\end{proof}		

\begin{figure}[h] 
	\centering
	\begin{tikzpicture}[>=stealth', initial text={},shorten 
	>=1pt,auto,inner sep=2,node distance=1.9cm]
	\node[state,initial above] (0) [label=center:{$0$}]{};
	\node[state,initial above] (1) [right of=0,label=center:{$1$}]{};
	\node[state,initial above] (2) [right of=1,label=center:{$2$}]{};
	\node[state,initial above] (3) [right of=2,label=center:{$3$}]{};
	\node[state] (4) [right of=3,label=center:{$4$}]{};
	\node[state] (5) [right of=4,label=center:{$5$}]{};
	\node[state] (6) [right of=5,label=center:{$6$}]{};
	\node[state,accepting] (7) [right of=6,label=center:{$7$}]{};
			
	\draw [->] (0) to node   {$a$} (1);
	\draw [->] (1) to node   {$a$} (2);
	\draw [->] (2) to node   {$a$} (3);
	\draw [->] (3) to node   {$a$} (4);
	\draw [->] (4) to node   {$a$} (5);
	\draw [->] (5) to node   {$a$} (6);
	\draw [->] (6) to node   {$a$} (7);
	\draw [->] (7.250) to  [bend left,above,pos=0.7] node  {$a$} (0);
	\draw [->,red] (0) to [loop left ] node   {$b$} (0);
	\draw [->,red] (7) to [loop above ] node   {$b$} (7);
	\draw [->,red] (7.235) to  [bend left,above,pos=0.7] node  {$b$} (1);
	\draw [->,red] (7.230) to  [bend left,above,pos=0.7] node  {$b$} (2);
	\draw [->,red] (7.225) to  [bend left,above,pos=0.7] node  {$b$} (3);
	\draw [->,red] (7.220) to  [bend left,above,pos=0.7] node  {$b$} (4);
	\draw [->,red] (7.215) to  [bend left,above,pos=0.7] node  {$b$} (5);
	\draw [->,red] (7) to  [bend left,above] node  {$b$} (6);	
	\end{tikzpicture}
	\caption{The  MNFA $A$; $n=3$.}
	\label{fig:mnfaA}
\end{figure}
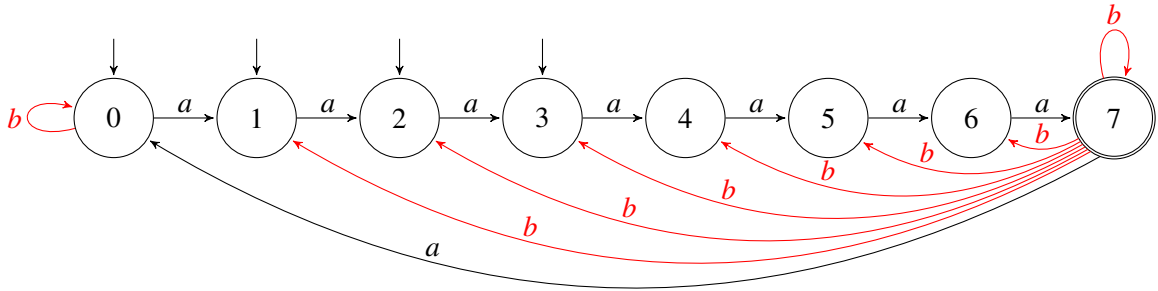

\section{Operational Complexity on Boolean and Alternating Finite Automata}

In this section  we use the four corollaries from the previous section
to get the complexity of basic regular operations on languages
represented by Boolean and alternating finite automata.
The idea is as follows. 
Consider a binary operation and take languages~$K$ and~$L$
recognized by a~$2^m$-state and~$2^n$-state DFA, respectively,
that are witnesses for the considered operation on DFAs.
Then the languages~$K^R$ and~$L^R$
are accepted by an~$m$-state and~$n$-state BFA, respectively.
Now it is enough to show that the language resulting 
from the operation applied to the languages~$K^R$ and~$L^R$
requires large enough BFA. In the case of AFAs,
we start with DFAs with half of their states final
that are hard for the considered operation on DFAs.
We illustrate this idea for the concatenation operation.

\begin{theorem}[\bf Concatenation on BFAs]
	Let~$K$ and~$L$ be languages over an alphabet~$\Sigma$
	accepted by an~$m$-state and~$n$-state BFA, respectively.
	Then the language~$KL$ is accepted by a BFA of at most~$2^m+n$ states,
	and this upper bound is tight if~$|\Sigma|\ge2$.
\end{theorem}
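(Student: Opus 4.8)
The plan is to prove the two parts---upper bound and tightness---separately, exploiting the correspondence between BFAs and DFAs established in the preceding corollaries.

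\textbf{Upper bound.} I would argue that if $K$ and $L$ are accepted by an $m$-state and $n$-state BFA, then I first want a reasonable-sized automaton for the concatenation $KL$. The natural route is to convert $K$ into an automaton with a controllable initial configuration. By Proposition~\ref{prop:bfa-to-mnfa}, the $m$-state BFA for $K$ yields a $2^m$-state MNFA $A_K$. Meanwhile I keep the $n$-state BFA for $L$. The idea is to build a single device for $KL$ by running the $2^m$-state MNFA for $K$, and upon each point where $A_K$ could accept, spawning a copy of the $L$-machine. A clean way to realize this is to form an $(2^m+n)$-state MNFA-like object: the $2^m$ states simulate $K$, and the $n$ states carry the BFA for $L$, glued so that whenever a final configuration of $K$ is reached the computation may jump into the initial function of the $L$-part. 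Concretely, I would take the $2^m$-state MNFA for $K$ whose reverse is a DFA, append the $n$ states of the BFA for $L$, and then show that the combined machine still has the property that its reverse is a DFA on the $2^m$ block. Then by Corollary~\ref{co:dfa-to-bfa} (or rather the Proposition preceding it) the combined $(2^m+n)$-state structure collapses back to a BFA with $m + $ (something) states---here I must be careful: the $2^m$ MNFA states compress to $m$ BFA states, but the $n$ states for $L$ stay as they are, giving $2^m + n$ in the MNFA count, hence the claimed BFA bound after recompression. The cleanest formulation is therefore: exhibit a $(2^m+n)$-state MNFA for $KL$ whose structure lets us apply the compression results, yielding a BFA of the stated size.

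\textbf{Tightness.} Following the explicit recipe in the section preamble, I would start from DFA witnesses for concatenation. Take languages $K$ and $L$ accepted by a $2^m$-state and $2^n$-state DFA that are known witnesses for concatenation on DFAs (where $KL$ requires roughly $2^m \cdot 2^n$ deterministic states). By Corollary~\ref{co:dfa-to-bfa}, the reversed languages $K^R$ and $L^R$ are accepted by an $m$-state and $n$-state BFA respectively. The operation I actually analyze is then $L^R K^R = (KL)^R$, so a lower bound on the BFA size of $(KL)^R$ follows from a lower bound on the number of deterministic states for its reverse, namely $KL$ itself, via Corollary~\ref{co:bfa-to-dfa}: any $s$-state BFA for $(KL)^R$ gives a $2^s$-state DFA for $KL$, so if $KL$ needs at least $2^{2^m+n}$ deterministic states then $(KL)^R$ needs at least $2^m+n$ BFA states.

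\textbf{Main obstacle.} The hard part will be the tightness direction, specifically producing DFA witnesses $K, L$ over a binary alphabet for which the concatenation $KL$ provably requires the full $2^{2^m+n}$ deterministic states, matching exactly what is needed after the BFA-to-DFA blow-up. Off-the-shelf concatenation witnesses on DFAs give a state complexity of the form $2^{M} \cdot 2^{N} - 2^{N-1}$ for $M$-state and $N$-state machines, which is not literally $2^{2^m+n}$; I will need to choose the witnesses so that the reachable-and-distinguishable subset structure of the product-plus-subset construction for $KL$ hits precisely $2^{2^m} \cdot 2^n$ (or a quantity whose binary logarithm is exactly $2^m+n$). Verifying reachability and pairwise distinguishability of all these subsets over a two-letter alphabet---and confirming that binary is genuinely necessary (a unary alphabet cannot achieve this)---is the technical core of the argument. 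I would handle it by adapting the standard concatenation-witness automata, using one letter to drive a cyclic/counting behaviour achieving reachability and the other to separate states for distinguishability, then appealing to the simulation corollaries to transfer the count back to the BFA setting.
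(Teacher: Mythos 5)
Your overall architecture matches the paper's (glue a $2^m$-state MNFA for $K$ onto the $n$-state BFA for $L$ for the upper bound; for tightness, reverse DFA concatenation witnesses and transfer the bound through the logarithmic BFA--DFA correspondence), but both halves contain a step that would fail as written.

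For the upper bound, the ``recompression'' you invoke is both inapplicable and self-defeating. The glued machine is \emph{already} a BFA with $2^m+n$ states: a BFA allows arbitrary Boolean transition functions, so one simply redefines the transition of each final state $q\in F_M$ of the MNFA part to be $q\cdot_M a \lor g_B\cdot_B a$, keeps the $L$-part intact, and counts states. Note the glued object is a BFA but not an MNFA (the $L$-part retains Boolean transitions), so your ``cleanest formulation'' via a $(2^m+n)$-state MNFA for $KL$ is not available. More importantly, the compression result you want to apply (the converse of Proposition~\ref{prop:bfa-to-mnfa}) requires the whole automaton to be an MNFA whose reverse is a DFA; after gluing this fails. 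And if the compression you describe did go through --- the $2^m$-block collapsing to $m$ states with the $n$ states of the $L$-part untouched --- you would obtain an $(m+n)$-state BFA for $KL$, contradicting exactly the tightness statement you prove next. No compression is needed or possible here.

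For tightness, the ``main obstacle'' you identify is a target that is impossible to hit and unnecessary to hit. No DFA witnesses, binary or otherwise, can make $KL$ require $2^{2^m+n}$ deterministic states: Maslov's upper bound caps the concatenation of a $2^n$-state and a $2^m$-state DFA language at $2^n2^{2^m}-2^{2^m-1}<2^{2^m+n}$. What the argument actually needs is only that the minimal DFA for $KL$ have more than $2^{2^m+n-1}$ states, because the lower bound transferred through Corollary~\ref{co:bfa-to-dfa} is the \emph{ceiling} of the logarithm, and indeed $\lceil\log(2^n2^{2^m}-2^{2^m-1})\rceil=2^m+n$ since $2^n2^{2^m}-2^{2^m-1}>2^{2^m+n-1}$ (this inequality is equivalent to $n\ge1$). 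Hence the off-the-shelf Maslov witnesses close the argument with no new reachability or distinguishability analysis; the ``technical core'' you plan is a detour. Finally, watch the bookkeeping: reversal swaps the factors, so the first factor of the BFA concatenation comes from the \emph{second} (exponent-contributing) DFA factor. To land on the bound $2^m+n$ you must take $K$ on $2^n$ states and $L$ on $2^m$ states, as the paper does; your assignment ($K$ on $2^m$ states, $L$ on $2^n$) yields the bound $2^n+m$ for $L^RK^R$, i.e., the theorem with $m$ and $n$ interchanged, which further clashes with your stated target $2^{2^m+n}$.
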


\begin{proof}
	To get an upper bound, let~$A=(Q_A,\Sigma,\cdot_A,g_A,F_A)$ 
	and~$B=(Q_B,\Sigma,\cdot_B,g_B,F_B)$ be BFAs accepting
	the languages~$K$ and~$L$, respectively.
	We first convert the BFA $A$
	to the~$2^m$-state MNFA $M=(Q_M,\Sigma,\cdot_M,g_M,F_M)$.
	Now we construct a BFA $C=(Q_M\cup Q_B,\Sigma,\cdot,g_M,F_B)$ with
	$$
	q\cdot a =
	\begin{cases}
		q\cdot_M a,                    & \text{if $q\in Q_M\setminus F_M$};\\
		q\cdot_M a \lor g_B \cdot_B a, & \text{if $q\in F_M$};\\
		q\cdot_B a,                    & \text{if $q\in Q_B$};\\
	\end{cases}
   $$
   cf.~\cite[Theorem~9.2]{fjy90}. Then the BFA~$C$ has~$2^m+n$ states 
   and recognizes the language~$KL$.
   
   To get tightness, let~$K$ and~$L$ be Maslov's binary witnesses for
   concatenation on DFAs from~\cite{ma70}, see Figure~\ref{fig:ma},
   accepted by a~$2^n$-state and~$2^m$-state DFA, respectively.
   Then every DFA accepting the language~$KL$ has at least~$2^n2^{2^m}-2^{2^m-1}$ states.
   By Corollary~\ref{co:dfa-to-bfa}, the languages~$L^R$ and~$K^R$ 
   are accepted by~$m$-state and~$n$-state BFA,    respectively.
   Next, we have~$(L^RK^R)^R=KL$,
   so every DFA accepting the reverse of the concatenation~$L^RK^R$ has at least~$2^n2^{2^m}-2^{2^m-1}$ states.
   By Corollary~\ref{co:bfa-to-dfa}, it follows that every BFA accepting~$K^RL^R$
   has at least~$\lceil\log(2^n2^{2^m}-2^{2^m-1})\rceil=2^m+n$ states.
\end{proof}

\begin{figure}[h] 
		\hglue1.5cm
	\begin{tikzpicture}[>=stealth', initial text={$A$},shorten 
			>=1pt,auto,inner sep=2,node distance=1.9cm]
	\node[state,initial] (0) [label=center:{$1$}]{};
	\node[state] (1) [right of=0,label=center:{$2$}]{};
	\node[state,draw=none] (2) [right of=1,label=center:{$\ldots$}]{};
	\node[state] (3) [right of=2,label=center:{$m-1$}]{};
	\node[state,accepting] (4) [right of=3,label=center:{$m$}]{};
			
	\draw [->] (0) to node   {$a$} (1);
	\draw [->] (1) to node   {$a$} (2);
	\draw [->] (2) to node   {$a$} (3);
	\draw [->] (3) to node   {$a$} (4);
	\draw [->] (4) to [bend left=25,above] node   {$a$} (0);
	\draw [->] (0) to [loop above] node   {$b$} (0);
	\draw [->] (1) to [loop above] node   {$b$} (1);
	\draw [->] (3) to [loop above] node   {$b$} (3);
	\draw [->] (4) to [loop above] node   {$b$} (4);
\end{tikzpicture}	
\\
\hglue4cm
	\begin{tikzpicture}[>=stealth', initial text={$B$},shorten 
	>=1pt,auto,inner sep=2,node distance=1.9cm]
	\node[state,initial] (0) [label=center:{$1$}]{};
	\node[state] (1) [right of=0,label=center:{$2$}]{};
	\node[state,draw=none] (2) [right of=1,label=center:{$\ldots$}]{};
	\node[state] (3) [right of=2,label=center:{$n-2$}]{};
	\node[state] (4) [right of=3,label=center:{$n-2$}]{};
	\node[state,accepting] (5) [right of=4,label=center:{$n$}]{};
	
	\draw [->] (0) to node   {$b$} (1);
	\draw [->] (1) to node   {$b$} (2);
	\draw [->] (2) to node   {$b$} (3);
	\draw [->] (3) to node   {$b$} (4);
	\draw [->] (4) to node   {$a,b$} (5);
	\draw [->] (5) to [loop above] node   {$b$} (5);
	
	\draw [->] (0) to [loop above] node   {$a$} (0);
	\draw [->] (1) to [loop above] node   {$a$} (1);
	\draw [->] (3) to [loop above] node   {$a$} (3);
	\draw [->] (5) to [bend left,above] node   {$a$} (4);
		\end{tikzpicture}
	\caption{Maslov's witness DFAs for concatenation meeting the upper bound~$m2^n-2^{n-1}$.}
	\label{fig:ma}
\end{figure}
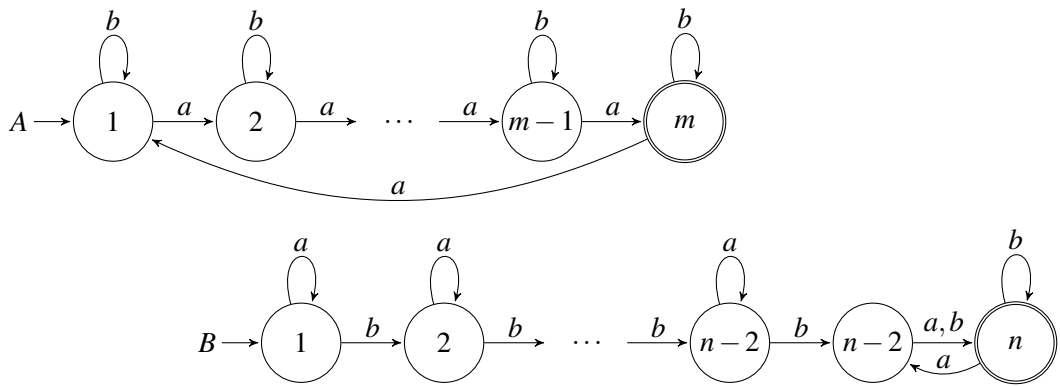

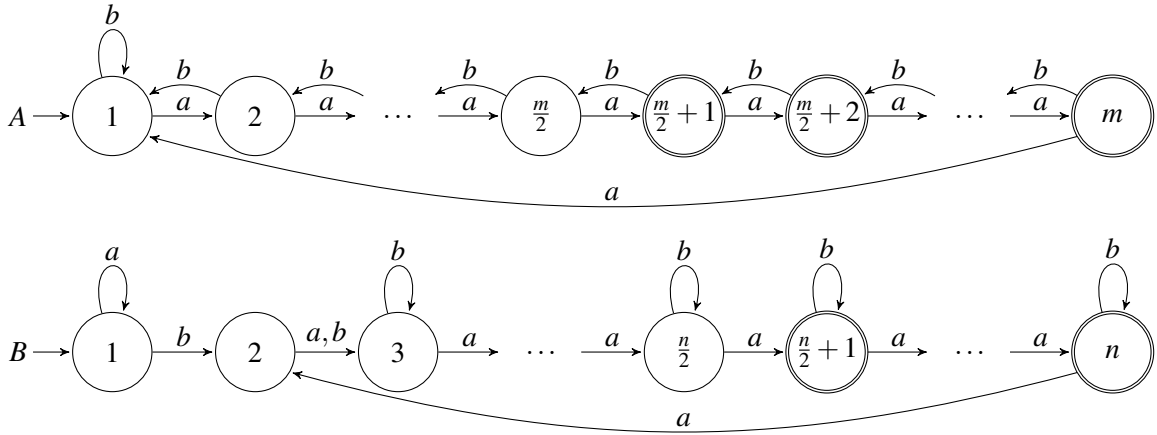
\begin{figure}[h] 
		\begin{tikzpicture}[>=stealth', initial text={$A$},shorten 
			>=1pt,auto,inner sep=2,node distance=1.9cm]
			\node[state,initial] (0) [label=center:{$1$}]{};
			\node[state] (1) [right of=0,label=center:{$2$}]{};
			\node[state,draw=none] (2) [right of=1,label=center:{$\ldots$}]{};
			\node[state] (3) [right of=2,label=center:{$\frac{m}{2}$}]{};
			\node[state,accepting] (4) [right of=3,label=center:{$\frac{m}{2}+1$}]{};
			\node[state,accepting] (5) [right of=4,label=center:{$\frac{m}{2}+2$}]{};
			\node[state,draw=none] (6) [right of=5,label=center:{$\ldots$}]{};
			\node[state,accepting] (7) [right of=6,label=center:{$m$}]{};
			
			\draw [->] (0) to node   {$a$} (1);
			\draw [->] (0) to [loop above] node   {$b$} (0);
			\draw [->] (1) to node   {$a$} (2);
			\draw [->] (2) to node   {$a$} (3);
			\draw [->] (3) to node   {$a$} (4);
			\draw [->] (4) to node   {$a$} (5);
			\draw [->] (5) to node   {$a$} (6);
			\draw [->] (6) to node   {$a$} (7);
			\draw [->] (7.210) to [bend left=15,above] node   {$a$} (0.-30);
			\draw [->] (1) to [bend right,above] node   {$b$} (0);
			\draw [->] (2) to [bend right,above] node   {$b$} (1);
			\draw [->] (3) to [bend right,above] node   {$b$} (2);
			\draw [->] (4) to [bend right,above] node   {$b$} (3);
			\draw [->] (5) to [bend right,above] node   {$b$} (4);
			\draw [->] (6) to [bend right,above] node   {$b$} (5);
			\draw [->] (7) to [bend right,above] node   {$b$} (6);			
	\end{tikzpicture}	
		\\
		\begin{tikzpicture}[>=stealth', initial text={$B$},shorten 
			>=1pt,auto,inner sep=2,node distance=1.9cm]
			\node[state,initial] (0) [label=center:{$1$}]{};
			\node[state] (1) [right of=0,label=center:{$2$}]{};
			\node[state] (2) [right of=1,label=center:{$3$}]{};
			\node[state,draw=none] (3) [right of=2,label=center:{$\ldots$}]{};
			\node[state] (4) [right of=3,label=center:{$\frac{n}{2}$}]{};
			\node[state,accepting] (5) [right of=4,label=center:{$\frac{n}{2}+1$}]{};
			\node[state,draw=none] (6) [right of=5,label=center:{$\ldots$}]{};
			\node[state,accepting] (7) [right of=6,label=center:{$n$}]{};
			
			\draw [->] (0) to node   {$b$} (1);
			\draw [->] (1) to node   {$a,b$} (2);
			\draw [->] (2) to [loop above] node   {$b$} (2);
			\draw [->] (4) to [loop above] node   {$b$} (4);
			\draw [->] (5) to [loop above] node   {$b$} (5);
			\draw [->] (7) to [loop above] node   {$b$} (7);

			\draw [->] (0) to [loop above] node   {$a$} (0);
			\draw [->] (2) to   node   {$a$} (3);
			\draw [->] (3) to   node   {$a$} (4);
			\draw [->] (4) to   node   {$a$} (5);
			\draw [->] (5) to   node   {$a$} (6);
			\draw [->] (6) to   node   {$a$} (7);
			\draw [->] (7.210) to [bend left=15,above] node   {$a$} (1.-30);
		\end{tikzpicture}
		\caption{Witness DFAs for concatenation with half of states final meeting the upper bound~$m2^n-\frac{m}{2}2^{n-1}$.}
	\label{fig:hos}
\end{figure}

\begin{theorem}[\bf Concatenation on AFAs]
	Let~$K$ and~$L$ be languages over an alphabet~$\Sigma$
	accepted by an~$m$-state and~$n$-state AFA, respectively.
	Then the language~$KL$ is accepted by a AFA of at most~$2^m+n+1$ states,
	and this upper bound is tight if~$|\Sigma|\ge2$.
\end{theorem}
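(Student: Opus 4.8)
The plan is to get the upper bound almost for free from the Concatenation-on-BFAs theorem, and to force the extra state in the lower bound not through the number of states of an associated DFA but through its number of \emph{final} states, exploiting that Corollary~\ref{co:afa-to-dfa} produces a DFA in which exactly half of the states are final.

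For the upper bound I would first treat the two AFAs as BFAs and invoke the previous theorem to obtain a BFA $C$ with $2^m+n$ states recognizing $KL$. Its initial function is the disjunction $g_M$ of the $2^{m-1}$ initial states of the MNFA built from the first automaton, so it is not a projection and $C$ is not yet an AFA. I would then promote \emph{any} $k$-state BFA to a $(k+1)$-state AFA: add a fresh state $q_0$, take the projection $q_0$ as the initial function, set $q_0\cdot a:=g_M\cdot a$ for each $a\in\Sigma$ (a Boolean function over the old states only), and put $q_0$ in the final set exactly when $\eps\in KL$. A one-line induction on $|w|$ gives $q_0\cdot w=g_M\cdot w$ for $w\neq\eps$, so the resulting $(2^m+n+1)$-state automaton still accepts $KL$; this is the construction of {\cite[Theorem~9.2]{fjy90}} with the projection/$\eps$ adjustment that accounts for the ``$+1$''.

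For tightness I would use the binary witnesses of Figure~\ref{fig:hos}: let $A$ be the $2^n$-state DFA with $2^{n-1}$ final states accepting $L^R$ and $B$ the $2^m$-state DFA with $2^{m-1}$ final states accepting $K^R$, so that by Corollary~\ref{co:dfa-to-afa} the languages $L$ and $K$ are accepted by an $n$-state and an $m$-state AFA. Since $(KL)^R=L^RK^R=L(A)L(B)$, the Maslov product construction yields a DFA for $(KL)^R$ with $2^n2^{2^m}-2^{n-1}2^{2^m-1}=3\cdot 2^{2^m+n-2}$ states, namely the pairs $(p,S)$ with $p$ a state of $A$ and $S$ a set of states of $B$ subject to $s_0\in S$ whenever $p\in F_A$ (where $s_0$ is the initial state of $B$), with $(p,S)$ final iff $S\cap F_B\neq\emptyset$. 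Now suppose $KL$ had a $p$-state AFA. By Corollary~\ref{co:afa-to-dfa}, $(KL)^R$ would be accepted by a $2^p$-state DFA with \emph{exactly} $2^{p-1}$ final states; but the reachable final states of any DFA map onto the final states of its minimal DFA, so every DFA for $(KL)^R$ has at least as many final states as the minimal one. Hence it suffices to prove that the minimal DFA of $L(A)L(B)$ has strictly more than $2^{2^m+n-1}$ final states, for then $2^{p-1}>2^{2^m+n-1}$, i.e. $p\ge 2^m+n+1$.

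The counting of final states is the heart of the matter and the step I expect to be hardest. Tallying the pairs $(p,S)$ with $S\cap F_B\neq\emptyset$, and using that $B$ is half-final with $s_0\notin F_B$, gives
\[
\tfrac32\,2^{\,n-1}\bigl(2^{2^m}-2^{2^{m-1}}\bigr)
=\tfrac34\,2^{\,2^m+n}\bigl(1-2^{-2^{m-1}}\bigr)
\ \ge\ \tfrac{9}{16}\,2^{\,2^m+n}\ >\ 2^{\,2^m+n-1}\qquad(m\ge2),
\]
which is exactly the surplus of final states that defeats the $2^{p-1}$ budget at $p=2^m+n$. The subtlety is that this arithmetic bounds the minimal final count only after one has verified that all the relevant pairs $(p,S)$ are genuinely reachable in the product automaton and that enough of them are pairwise distinguishable; carrying out this Maslov-style reachability-and-distinguishability analysis for the concrete binary automata of Figure~\ref{fig:hos} is the technical core, while the reduction from a deterministic final-state count to an AFA lower bound through the two corollaries is the conceptually new ingredient.
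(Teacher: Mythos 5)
Your proposal is correct and follows essentially the same route as the paper: the upper bound comes from the BFA concatenation theorem plus one extra state to restore the projection initial function, and the lower bound uses the half-final witnesses of Figure~\ref{fig:hos} together with Corollaries~\ref{co:dfa-to-afa} and~\ref{co:afa-to-dfa}, exploiting that the minimal DFA for the product language has more than $2^{2^m+n-1}$ final states while any DFA obtained from a $p$-state AFA has exactly $2^{p-1}$ final states. The final-state count you compute (and the reachability/distinguishability verification you explicitly defer) is exactly what the paper outsources to \cite[Lemma~6.4]{hj18}; your one-step derivation of $p\ge 2^m+n+1$ from the final-state count alone is a mild streamlining of the paper's two-step argument (total states give $\ge 2^m+n$, final states rule out equality), not a different method.
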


\begin{proof}
	The upper bound follows from the previous theorem
	since one more state is enough to get an AFA equivalent to a given BFA.
	To get tightness, we use languages~$K$ and~$L$ accepted 
	by~$2^n$-state and~$2^m$-state witness DFAs for concatenation
	with half of their states final from~\cite[Theorem~4.7]{hj18},
	see Figure~\ref{fig:hos}. Then the minimal  DFA for~$KL$
	has~$2^n2^{2^m}-2^{n-1}2^{2^m-1}$ states,
	of which more that~$2^{2^m+n-1}$ states are final~\cite[Lemma~6.4]{hj18}.
	Then the languages~$L^R$ and~$K^R$ are accepted
	by an~$m$-state and~$n$-state AFA, respectively.
	Next, we have~$(L^RK^R)^R=KL$, so
	every AFA for~$L^RK^R$ has at least~$\lceil\log(2^n2^{2^m}-2^{n-1}2^{2^m-1})\rceil=2^m+n$
	states. If an AFA of~$2^m+n$ states would accept~$L^RK^R$,
	then the reverse of this language, that is, the language~$KL$
	would be accepted by a DFA of~$2^{2^m+n}$ states with~$2^{2^m+n-1}$ final states. However,
	the minimal DFA for~$KL$ has more than~$2^{2^m+n-1}$ final states, a contradiction.  
\end{proof}

Hence, the upper bound~$2^m+n+1$
for concatenation on AFAs from~\cite[Theorem~9.3]{fjy90}
is tight. This provides an answer to the second open problem from~\cite{fjy90}. A similar idea as for concatenation
also works for square, and left and right quotients.
Our results for the star operation are covered by the next theorem.

\begin{theorem}[\bf Star on BFAs and AFAs]
	If~$L$ is accepted by an~$n$-state BFA,
	then~$L^*$ is accepted by a~$2^n$-state AFA.
	Moreover, there exists a binary language~$L$
	accepted by an~$n$-state AFA
	such that every BFA for~$L^*$ has at least~$2^n$ states.
\end{theorem}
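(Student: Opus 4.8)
The plan is to exploit the duality between BFAs and DFAs of double-exponential size exactly as in the concatenation theorems above, using the reversal correspondence of Corollaries~\ref{co:bfa-to-dfa}, \ref{co:dfa-to-bfa}, and~\ref{co:dfa-to-afa}. The upper bound is the easy direction: if $L$ is accepted by an $n$-state BFA, then by Proposition~\ref{prop:bfa-to-mnfa} it is accepted by a $2^n$-state MNFA, and a standard star construction on MNFAs (adding the initial states to the transition targets of final states, as in the concatenation construction above) yields an MNFA for $L^*$ on the same $2^n$-state set whose initial function becomes a disjunction; reading this as a Boolean automaton gives a $2^n$-state BFA, and since at most one additional projection state is needed to supply the initial function as $q_1$, we in fact obtain a $2^n$-state AFA for $L^*$. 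I would write this construction explicitly, checking that $\eps\in L^*$ is handled (the empty-string acceptance is built into making the initial states final or into the projection state), so that no extra state beyond $2^n$ is required.

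For tightness I would proceed by the reversal trick. I would take a $2^n$-state DFA $D$ with exactly $2^{n-1}$ final states whose language $J$ is a \emph{star witness on DFAs}, meaning every DFA accepting $J^*$ needs the full $2^{2^n}$ states (such witnesses with half of their states final are known in the DFA literature). By Corollary~\ref{co:dfa-to-afa} the language $L=J^R$ is accepted by an $n$-state AFA. The key identity is $(L^*)^R=(J^R)^{*R}=(J^*)^R$, obtained from $(K^*)^R=(K^R)^*$ applied to $K=J^R$. Hence the reverse of $L^*$ is $(J^*)^R$, whose minimal DFA, being the reverse of the language $J^*$, has $2^{2^n}$ states by Corollary~\ref{co:bfa-to-dfa}'s lower-bound reading. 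I would then invoke Corollary~\ref{co:bfa-to-dfa}: any BFA for $L^*$ on $k$ states gives a $2^k$-state DFA for $(L^*)^R=(J^*)^R$, forcing $2^k\ge 2^{2^n}$, i.e.\ $k\ge 2^n$.

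The step I expect to be the main obstacle is securing the right DFA star-witness: I need a binary $2^n$-state DFA whose language $J$ requires $2^{2^n}$ states for $J^*$ \emph{and} whose reverse $J^R$ is a language whose own minimal DFA is manageable via the half-final correspondence, so that Corollary~\ref{co:dfa-to-afa} actually applies (it requires precisely $2^{n-1}$ final states). The clean way is to choose the witness so that $J^*$ itself, rather than $(J^*)^R$, has the large minimal DFA, and then transfer the bound through reversal; but the subtlety is that the known maximal star-complexity DFA witnesses may not have the half-final structure, so I would either (i) verify that a suitable half-final star-witness of size $2^n$ exists over a binary alphabet, or (ii) design one directly, analogous to the concatenation witness of Figure~\ref{fig:hos}, and compute the number of final states in the minimal DFA for its star to confirm it is large enough to block a $(2^n-1)$-state AFA the way the final-state count argument blocks it in the AFA concatenation proof. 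Once that witness is in hand, the counting argument is routine and mirrors the concatenation case verbatim.
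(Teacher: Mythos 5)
Your overall strategy---the reversal duality between BFAs/AFAs and doubly-exponential DFAs---is the same as the paper's, but both halves of your argument have genuine gaps.

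For the upper bound, your direct construction does not deliver the theorem. Reading the starred MNFA as a Boolean automaton gives, at best, a BFA whose initial function is a disjunction; an AFA requires the initial function to be the single projection $q_1$, and your remark that ``at most one additional projection state is needed \dots\ we in fact obtain a $2^n$-state AFA'' is self-contradictory: adding a state yields $2^n+1$ states, which is exactly the Fellah et al.\ bound that this theorem is meant to improve. The $\varepsilon$-acceptance issue you flag is also real, not cosmetic: star on NFAs/MNFAs genuinely requires an extra state in the worst case, and making initial states final can enlarge the language. The paper sidesteps the construction entirely: from the $2^n$-state DFA for $L^R$ (Corollary~\ref{co:bfa-to-dfa}) it invokes the result that $(L^R)^*=(L^*)^R$ is accepted by a $2^{2^n}$-state DFA with \emph{exactly half of its states final} \cite[Proposition~8]{hjk18}, and then Corollary~\ref{co:dfa-to-afa} converts this half-final DFA into a $2^n$-state AFA for $L^*$. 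The half-final structure is precisely what buys the projection initial function without an extra state; your proposal has no substitute for it.

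For the lower bound, the witness property you demand is impossible to satisfy: no language $J$ with a $2^n$-state DFA can have $J^*$ requiring $2^{2^n}$ DFA states, since the star of any $m$-state DFA language is accepted by a DFA with at most $2^{m-1}+2^{m-2}<2^m$ states. Fortunately, far less is needed: a $k$-state BFA for $L^*$ gives a $2^k$-state DFA for $(L^*)^R$, so it suffices that the minimal DFA for $(L^*)^R$ has more than $2^{2^n-1}$ states. Note also that your reversal identity is off: with $L=J^R$ one gets $(L^*)^R=((J^R)^*)^R=((J^*)^R)^R=J^*$, not $(J^*)^R$, so the language whose DFA complexity must be bounded below is $J^*$ itself. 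This is exactly what the paper does: it takes Palmovsk\'y's star witness $J$ with $2^n$ states, half of them final, for which the minimal DFA of $J^*$ has $2^{2^n-1}+2^{2^n-1-2^{n-1}}>2^{2^n-1}$ states, whence every BFA for $L^*=(J^R)^*$ needs at least $\lceil\log(2^{2^n-1}+2^{2^n-1-2^{n-1}})\rceil=2^n$ states by Corollary~\ref{co:bfa-to-dfa}. Finally, your plan to count final states so as to ``block a $(2^n-1)$-state AFA'' is misdirected here: the claimed lower bound is for BFAs, so the logarithmic argument alone suffices; final-state counting is only needed when an AFA bound must exceed the corresponding BFA bound, as in concatenation or union.
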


\begin{proof}
	If~$L$ is accepted by an~$n$-state BFA,
	then~$L^R$ is accepted by a~$2^n$-state DFA by Corollary~\ref{co:bfa-to-dfa}.
	Then~$(L^R)^*$ is accepted by a~$2^{2^n}$-state DFA 
	with half of its state final~\cite[Proposition~8]{hjk18}.
	Next, we have~$(L^*)^R=(L^R)^*$.
	Hence~$L^*$ is accepted by an~$n$-state AFA by Corollary~\ref{co:dfa-to-afa}.
	
	To get tightness, let~$L$ be the Palmovsk\'y's
	witness DFA for star with~$2^n$ states half of which are final~\cite[Theorem~4.4]{pa16}, see Figure~\ref{fig:pal}.
	Then~$L^R$ is accepted by an~$n$-state AFA by Corollary~\ref{co:dfa-to-afa}.
	Next, we have~$((L^R)^*)^R=((L^*)^R)^R=L^*$,
	and every DFA for~$L^*$ has at least~$2^{2^n-1}+2^{2^n-1+2^{n-1}}$
	states. It follows that every BFA for~$(L^R)^*$
	has at least~$\lceil\log(2^{2^n-1}+2^{2^n-1+2^{n-1}})\rceil=2^n$
	states by Corollary~\ref{co:bfa-to-dfa}.
\end{proof}

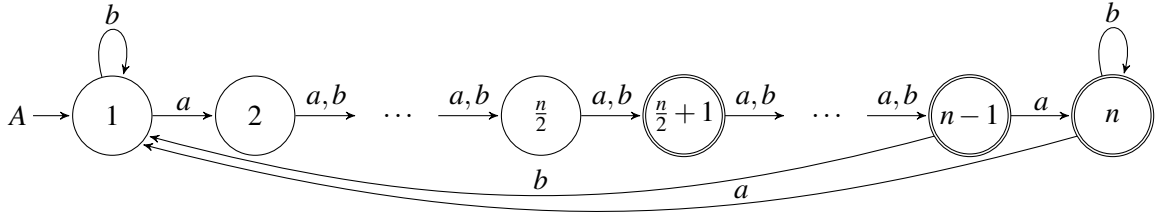
\begin{figure}[h] 
		\begin{tikzpicture}[>=stealth', initial text={$A$},shorten 
			>=1pt,auto,inner sep=2,node distance=1.9cm]
			\node[state,initial] (0) [label=center:{$1$}]{};
			\node[state] (1) [right of=0,label=center:{$2$}]{};
			\node[state,draw=none] (2) [right of=1,label=center:{$\ldots$}]{};
			\node[state] (3) [right of=2,label=center:{$\frac{n}{2}$}]{};
			\node[state,accepting] (4) [right of=3,label=center:{$\frac{n}{2}+1$}]{};			
			\node[state,draw=none] (5) [right of=4,label=center:{$\ldots$}]{};
			\node[state,accepting] (6) [right of=5,label=center:{$n-1$}]{};
			\node[state,accepting] (7) [right of=6,label=center:{$n$}]{};
			
			\draw [->] (0) to node   {$a$} (1);
			\draw [->] (1) to node   {$a,b$} (2);
			\draw [->] (2) to node   {$a,b$} (3);
			\draw [->] (3) to node   {$a,b$} (4);
			\draw [->] (4) to node   {$a,b$} (5);
			\draw [->] (5) to node   {$a,b$} (6);
			\draw [->] (6) to node   {$a$} (7);
			\draw [->] (7.210) to [bend left=15,above,pos=0.35] node   {$a$} (0.-45);
			\draw [->] (0) to [loop above] node   {$b$} (0);
			\draw [->] (7) to [loop above] node   {$b$} (7);
 			\draw [->] (6.210) to [bend left=15,above] node   {$b$} (0.-30);	
		\end{tikzpicture}	
\caption{Witness DFA for star with half of states final meeting the upper bound~$2^{n-1}+2^{n-1-\frac{n}{2}}$.}
\label{fig:pal}
\end{figure}

Similar arguments work for reversal.
If~$L$ is accepted by an~$n$-state BFA,
that~$L^R$ is accepted by a~$2^n$-state DFA, a special case of~AFA.
For tightness, we take the language~$L$ accepted by a~$2^n$-state
\v{S}ebej's DFA from~\cite[Fig.~6]{js12} 
with half of its states final. Then~$L^R$
ia accepted by an~$n$-state AFA,
while every DFA for~$L^R$
has at least~$2^{2^n}$ states.
Hence, every BFA for~$L=(L^R)^R$ has at least~$2^n$ states
by Corollary~\ref{co:dfa-to-bfa}.

We conclude this section with Boolean operations.
Denote by~bsc$(L)$ the number of states in a minimal,
with respect to the number of states, BFA accepting~$L$.
Define~asc$(L)$ in an analogous way.

\begin{proposition}
	\label{prop:complement}
	Let~$L$ be a regular language.
	Then~bsc$(L)=$bsc$(L^c)$
    and~asc$(L)=$asc$(L^c)$.
\end{proposition}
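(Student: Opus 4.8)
The plan is to prove both equalities by giving, in each model, a complementation construction that preserves the number of states. Since $(L^c)^c=L$, once I establish an inequality such as $\mathrm{bsc}(L^c)\le\mathrm{bsc}(L)$ (and likewise for $\mathrm{asc}$), applying the same construction to $L^c$ in place of $L$ yields the reverse inequality, and equality follows. So it suffices to convert an $n$-state BFA (resp.\ AFA) for $L$ into an $n$-state BFA (resp.\ AFA) for $L^c$.

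For the BFA case I would leave the automaton untouched except for negating the initial function. Given $A=(Q,\Sigma,\cdot,g_s,F)$, set $A'=(Q,\Sigma,\cdot,\lnot g_s,F)$. The key observation is that negation commutes with the extended transition: if $g=\lnot h$ then $g\cdot a=(\lnot h)(q_1\cdot a,\dots,q_n\cdot a)=\lnot\bigl(h(q_1\cdot a,\dots,q_n\cdot a)\bigr)=\lnot(h\cdot a)$, and a routine induction on $|w|$ gives $(\lnot g_s)\cdot w=\lnot(g_s\cdot w)$ for every $w$. Evaluating at the finality vector $f$ yields $((\lnot g_s)\cdot w)(f)=\lnot\bigl((g_s\cdot w)(f)\bigr)$, so $A'$ accepts exactly the strings rejected by $A$; hence $L(A')=L^c$ and $\mathrm{bsc}(L^c)\le\mathrm{bsc}(L)$.

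The AFA case is the main obstacle, because $\lnot g_s=\lnot q_1$ is no longer a projection, so the BFA trick violates the defining restriction of an AFA. Here I would instead \emph{dualize}. For $g\in\mathcal{B}_n$ write $g^d(q_1,\dots,q_n)=\lnot g(\lnot q_1,\dots,\lnot q_n)$, and recall that projections are self-dual ($q_i^d=q_i$) and that dualization interchanges $\land$ and $\lor$ by De~Morgan. Given an AFA $A=(Q,\Sigma,\cdot,q_1,F)$, define $A^d=(Q,\Sigma,\cdot^d,q_1,Q\setminus F)$ with $q_i\cdot^d a=(q_i\cdot a)^d$; the initial projection stays $q_1$ precisely because it is self-dual, and the set $Q\setminus F$ corresponds to the negated finality vector $\bar f=\lnot f$. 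The crucial lemma is that dualization commutes with the extended transition, namely $g^d\cdot^d w=(g\cdot w)^d$ for all $g$ and $w$. This reduces to the single-symbol substitution identity $\bigl(g(h_1,\dots,h_n)\bigr)^d=g^d(h_1^d,\dots,h_n^d)$, which I would verify by unwinding both sides through the definition of $(\cdot)^d$, and then lift to arbitrary $w$ by induction on $|w|$.

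Taking $g=q_1$ in the lemma and using $q_1^d=q_1$ gives $q_1\cdot^d w=(q_1\cdot w)^d$. Evaluating at $\bar f$ and using $\lnot\bar f=f$ then yields $(q_1\cdot^d w)(\bar f)=\lnot\bigl((q_1\cdot w)(f)\bigr)$, so $A^d$ accepts $w$ iff $A$ rejects it; thus $L(A^d)=L^c$ with the same $n$ states, giving $\mathrm{asc}(L^c)\le\mathrm{asc}(L)$. Applying both constructions to $L^c$ and invoking $(L^c)^c=L$ closes the argument and establishes $\mathrm{bsc}(L)=\mathrm{bsc}(L^c)$ and $\mathrm{asc}(L)=\mathrm{asc}(L^c)$. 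I expect the self-duality of the initial projection to be the pivotal point: it is exactly what lets the AFA dualization keep the initial function a projection, where the simpler BFA negation would fail.
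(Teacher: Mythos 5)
Your proof is correct, but it takes a genuinely different route from the paper's. The paper stays within its global framework of reversal correspondences: from a minimal $n$-state BFA for $L$ it passes to a $2^n$-state DFA for $L^R$ (Corollary~\ref{co:bfa-to-dfa}), complements that DFA to get one of the same size for $(L^R)^c=(L^c)^R$, and returns via Corollary~\ref{co:dfa-to-bfa} to an $n$-state BFA for $L^c$; minimality then follows from $(L^c)^c=L$ exactly as in your symmetric-inequality step. For the AFA claim the paper only has to note that complementing a $2^n$-state DFA with $2^{n-1}$ final states again leaves $2^{n-1}$ final states, so Corollaries~\ref{co:afa-to-dfa} and~\ref{co:dfa-to-afa} apply. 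You instead give direct, state-preserving complementation constructions inside the models themselves: negating the initial function for BFAs, and dualizing transitions and finality for AFAs, where the self-duality $q_1^d=q_1$ is precisely what keeps the initial function a projection. Both of your key steps are sound: $g\cdot a$ is functional substitution, so it commutes with outer negation, and the substitution identity $(g(h_1,\dots,h_n))^d=g^d(h_1^d,\dots,h_n^d)$ holds, which after evaluating at the complemented finality vector yields $L(A^d)=L(A)^c$. Your argument is the classical one (essentially how the complementation bound is obtained in~\cite{fjy90}) and is self-contained, needing neither the exponential DFA simulation nor the reversal machinery; the paper's argument buys uniformity, since the same four corollaries drive all the operation bounds in the paper, so complementation comes nearly for free once they are in place.
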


\begin{proof}
	If~$L$ is accepted by a minimal~$n$-state BFA,
	then~$L^R$ is accepted by a~$2^n$-state DFA by Corollary~\ref{co:bfa-to-dfa}.
	It follows that~$(L^R)^c=(L^c)^R$
	is accepted by a~$2^n$ state DFA,
	and therefore~$L^c$ is accepted by an~$n$-state BFA
	by Corollary~\ref{co:dfa-to-bfa}.
	Moreover, the language~$L^c$
	cannot be accepted by a smaller BFA
	because otherwise the language~$L=(L^c)^c$
	would be accepted by a smaller BFA as well.
	In the case of AFAs,
	the DFAs for~$L^R$ and~$(L^R)^c$
	have~$2^n$ states and~$2^{n-1}$ final states,
	and we use Corollaries~\ref{co:afa-to-dfa} and~\ref{co:dfa-to-afa}
	to get the result.
\end{proof}

\begin{theorem}
	Let~$K$ and~$L$ be languages over~$\Sigma$
	accepted by an~$m$-state and~$n$-state AFA, respectively.
	Then~$K\cup L$
	is accepted by an AFA of at most~$m+n+1$ states,
	and this upper bound is tight if~$|\Sigma|\ge1$.
\end{theorem}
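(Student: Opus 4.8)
The plan is to obtain the upper bound through the Boolean model and the lower bound through the DFA/AFA correspondence of Corollaries~\ref{co:afa-to-dfa} and~\ref{co:dfa-to-afa}, exactly mirroring the argument already used for concatenation on AFAs. For the upper bound I first build an $(m+n)$-state BFA for $K\cup L$. Let $A=(Q_A,\Sigma,\cdot_A,g_A,F_A)$ and $B=(Q_B,\Sigma,\cdot_B,g_B,F_B)$ be the given AFAs for $K$ and $L$ with disjoint state sets, so $g_A,g_B$ are the projections onto their first states. Take the BFA $C=(Q_A\cup Q_B,\Sigma,\cdot,g_A\lor g_B,F_A\cup F_B)$ whose transition function acts as $\cdot_A$ on $Q_A$ and as $\cdot_B$ on $Q_B$. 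Since there are no transitions between the two blocks, $(g_A\lor g_B)\cdot w=(g_A\cdot_A w)\lor(g_B\cdot_B w)$ for every $w$, and evaluating at the finality vector gives $L(C)=K\cup L$, so $C$ has $m+n$ states. Its initial function $g_A\lor g_B$ is a disjunction of two projections, hence not a projection, so $C$ is only a BFA; as recalled for concatenation, one extra state converts any BFA into an equivalent AFA, yielding an AFA for $K\cup L$ with $m+n+1$ states.

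For tightness over a unary alphabet I use witnesses obtained from unary DFAs with half of their states final. Let $U$ be a minimal unary DFA with $2^m$ states consisting of a single cycle of length $2^m-1$ together with one tail state, with all $2^{m-1}$ of its final states lying on the cycle, and let $V$ be a minimal unary cyclic DFA with $2^n$ states and $2^{n-1}$ final states. By Corollary~\ref{co:dfa-to-afa} the languages $K:=L(U)$ and $L:=L(V)$ are accepted by an $m$-state and an $n$-state AFA, respectively (minimality makes these sizes exact). The key point is that $\gcd(2^m-1,2^n)=1$, so in the reachable product the two cycles combine into one cycle of length $(2^m-1)2^n$ on which, by the Chinese Remainder Theorem, every pair consisting of a cyclic state of $U$ and a state of $V$ occurs exactly once. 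A product state is non-final iff both components are non-final, so the number of non-final states on this cycle is $(2^{m-1}-1)\cdot 2^{n-1}$, whence the number of final states is $(2^m-1)2^n-(2^{m-1}-1)2^{n-1}=3\cdot 2^{m+n-2}-2^{n-1}$, which is strictly larger than $2^{m+n-1}$ for $m\ge2$.

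I then turn this count into the lower bound as in the concatenation proof. Suppose, for contradiction, that $K\cup L$ were accepted by an AFA with $m+n$ states. Since $K\cup L$ is unary we have $(K\cup L)^R=K\cup L$, so by Corollary~\ref{co:afa-to-dfa} the language $K\cup L$ would be accepted by a $2^{m+n}$-state DFA with exactly $2^{m+n-1}$ final states. However, passing to the minimal DFA only removes unreachable states and merges equivalent states (which always have equal finality), and therefore cannot increase the number of final states; hence the minimal DFA of $K\cup L$ would have at most $2^{m+n-1}$ final states, contradicting the bound $3\cdot 2^{m+n-2}-2^{n-1}>2^{m+n-1}$ established above. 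Thus every AFA for $K\cup L$ needs at least $m+n+1$ states, matching the upper bound.

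The step I expect to be the real obstacle is the middle one: exhibiting witnesses $U$ and $V$ for which the minimal DFA of $K\cup L$ provably has more than $2^{m+n-1}$ final states. The delicate points are keeping the two cycle lengths coprime while each automaton stays minimal with exactly half of its states final, verifying that the combined finality pattern on the product cycle has full period $(2^m-1)2^n$ so that no further state-merging erodes the final-state count, and separately dispatching the small cases $m=1$ or $n=1$ (for instance by interchanging the roles of $U$ and $V$), where the clean inequality degenerates.
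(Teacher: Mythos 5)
Your upper bound is the paper's own argument almost verbatim (a disjoint-union BFA with initial function $g_A\lor g_B$, plus one state to restore the AFA form), and your lower-bound strategy is also the paper's: unary witnesses whose cycles have the coprime lengths $2^m-1$ and $2^n$, half of all states final, a final-state count, and a contradiction via Corollaries~\ref{co:afa-to-dfa} and~\ref{co:dfa-to-afa}. The paper differs only in that it fixes concrete witnesses (contiguous blocks of final states) and quotes \cite[Lemma~4.2, Theorem~4.4]{kr20} for the decisive fact that the \emph{minimal} DFA of the union has $2^m(2^n-1)$ states of which more than $2^{m+n-1}$ are final. That is exactly where your proof has a genuine gap: your CRT count of $3\cdot2^{m+n-2}-2^{n-1}$ final states lives in the reachable product automaton, not in the minimal DFA. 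If the union's finality pattern on the product cycle had a proper period $p$, the minimal DFA's cycle would have only $p\le\tfrac12(2^m-1)2^n<2^{m+n-1}$ states, hence fewer than $2^{m+n-1}$ final states, and no contradiction could be extracted. So the ``full period'' condition you defer is not a routine verification to be dispatched later; it carries the entire weight of the lower bound, and you neither choose explicit finality patterns nor prove it for generic minimal ones.

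The gap is fillable, and your instinct that factor minimality suffices is in fact correct, but the argument must be supplied (for $m,n\ge2$): under the CRT bijection between $\mathbb{Z}_{(2^m-1)2^n}$ and $\mathbb{Z}_{2^m-1}\times\mathbb{Z}_{2^n}$, the set of non-final positions of the union pattern is exactly the product $A\times B$, where $A$ and $B$ are the non-final position sets of the two cycles, with $|A|=2^{m-1}-1\ge1$ and $|B|=2^{n-1}\ge1$. A rotation by $p$ preserves a nonempty product set $A\times B$ if and only if it preserves $A$ and $B$ separately, and minimality of each factor DFA forces both stabilizers to be trivial; hence the union pattern has full period $(2^m-1)2^n$, no cycle states merge, and your count does transfer to the minimal DFA. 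Two residual loose ends remain even then: your role-swapping trick handles $m=1$, $n\ge2$ (and symmetrically), but neither variant covers $m=n=1$; and you should exhibit at least one minimal $U$ of the required shape (e.g.\ taking the $2^{m-1}$ final states as a contiguous arc of the cycle), since your witnesses are otherwise only postulated to exist.
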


\begin{proof}
	The language~$K\cup L$ can be accepted by a~$(m+n)$-state BFA
	constructed from the two AFAs by setting the initial function
	to the disjunction of the corresponding initial states.
	The upper bound for AFAs follows.
	For tightness, let~$K$ be the language
	accepted by the unary~$2^m$-state DFA
	with  state set~$\{0,1,\ldots,2^m-1\}$,
	the initial state 0, the set of 
	final states~$\{2^{m-1},2^{m-1}+1,\ldots ,2^m-1\}$,
	and transitions given by~$i\cdot a =(i+1)\bmod 2^m$.
	Then~$K^R=K$ is accepted by an~$m$-state AFA.
	Next, let~$L$ be a language accepted by a~$(2^n-1)$-state unary DFA 	with  state set~$\{0,1,\ldots,2^n-2\}$,
	the initial state 0, the set of 
	final states~$\{2^{n-1},2^{m-1}+1,\ldots ,2^m-2\}$,
	and transitions given by~$i\cdot a =(i+1)\bmod 2^m-1$.
	Then we can add an unreachable final state to this DFA
	to get an equivalent~$2^n$-state DFA with half of its states final. Hence~$L^R=L$ is accepted by an~$n$-state AFA.
	As shown in~\cite[Lemma~4.2, Theorem~4.4]{kr20},
	the minimal DFA for~$K\cup L$ has~$2^m(2^n-1)$ states,
	of which more than~$2^{m+n+1}$ are final.
	It follows that every AFA for~$K\cup L$  
	has at least~$m+n+1$ states.
\end{proof}

By Proposition~\ref{prop:complement} and De Morgan's laws,
the complement of the languages described in the previous proof are witnesses for intersection. The case of difference is analogous.
The same languages give a lower bound~$m+n$ 
for symmetric difference on AFAs~\cite[Lemma~4.2]{kr20} 
which is also an upper bound;
notice that the symmetric difference of two DFAs with half of their states final is accepted by a DFA with half of its states final.
Finally, exactly the same languages serve as witnesses for Boolean operations on BFAs~\cite[Theorem~4.3]{kr20}.

In the unary case, the reverse of any language is the same language,
and the right quotient is the same as the left quotient of the corresponding languages.
Moreover, we can show that the complexity of  star, concatenation,
and square on unary BFAs is~$2n,m+n,$ and~$n+1$, respectively.
It follows that whenever we used a binary alphabet to describe 
witnesses for the corresponding operations on BFAs and AFAs,
it was always optimal. 
 
The exact complexity of star, concatenation, and square on unary AFAs
remains open since the complexity of these operations on unary DFAs
with half of their states final is not known.
The complexity of less common regular operations
like shuffle, cyclic shift, or square root,
would be of interest as well.

%

\bibliographystyle{eptcs}
\bibliography{bfa}
\end{document}